\documentclass[suppldata]{interact}

\usepackage{epstopdf}
\usepackage[caption=false]{subfig}

\usepackage[numbers,sort&compress]{natbib}
\bibpunct[, ]{[}{]}{,}{n}{,}{,}
\renewcommand\bibfont{\fontsize{10}{12}\selectfont}
\makeatletter
\def\NAT@def@citea{\def\@citea{\NAT@separator}}
\makeatother

\theoremstyle{plain}
\newtheorem{theorem}{Theorem}[section]
\newtheorem{lemma}[theorem]{Lemma}
\newtheorem{corollary}[theorem]{Corollary}

\theoremstyle{definition}
\newtheorem{definition}[theorem]{Definition}

\theoremstyle{remark}
\newtheorem{remark}{Remark}

\usepackage{amssymb}
\usepackage{verbatim}
\usepackage{blkarray}
\usepackage{multirow}
\usepackage{graphicx}
\usepackage{afterpage}
\usepackage[hidelinks]{hyperref}
\usepackage{framed}
\usepackage{amsmath}
\usepackage{systeme}
\usepackage{xcolor}
\usepackage[all]{xy}
\usepackage{todonotes}

\newcommand{\ket}[1]{|#1\rangle}

\newcommand{\vp}{\varphi}

\newcommand{\img}{\operatorname{img}}

\newcommand{\f}[2]{\mathbb{F}_{#1}^{#2}}
\newcommand{\w}[1]{\mathbf{#1}}
\newcommand{\dhat}[1]{\widehat{\widehat{#1\,}}\!\!}

\DeclareMathOperator{\GPK}{GPK}

\begin{document}

\title{Two Problems on Quantum Computing in Finite Abelian Groups}

\author{
\name{Ulises Pastor--D\'iaz\textsuperscript{{$\dagger$}}\thanks{\textsuperscript{$\dagger$} CONTACT: Ulises Pastor--D\'iaz. Email: ulises.pastor@cunef.edu, ORCiD: 0000-0002-0309-7173} and Jos\'e~M. Tornero\textsuperscript{{$\ddagger$}}\thanks{\textsuperscript{$\ddagger$} Email: tornero@us.es, ORCiD: 0000-0001-5898-1049}}
\affil{\textsuperscript{$\dagger$} Departament of Mathematics. CUNEF University. C. de Leonardo Prieto Castro, 2, 28040 Madrid (Spain).}
\affil{\textsuperscript{$\ddagger$} Departamento de Álgebra, Facultad de Matemáticas, Universidad de Sevilla. Avda. Reina Mercedes s/n, 41012 Sevilla (Spain).}
}

\maketitle

\begin{abstract}
In the context of finite Abelian groups two problems are presented and solved using quantum computing techniques. The first is the well--known Hidden Subgroup Problem, originally solved by Simon in a landmark work. The second is the Fully Balanced Image Problem, originally introduced by the authors (joint with J. Ossorio--Castillo), which is related to a certain class of mappings (which contains strictly, for instance, the family of group morphisms). 

Both problems are tackled using a combination of two techniques: first, a conversion into Boolean objects, better suited for quantum computing arguments, and subsequently a custom--tailored algorithm which takes advantage of the Generalised Phase--Kick Back technique. 
\end{abstract}

\begin{keywords}
Quantum Algorithms; Generalised Phase Kick--Back; Hidden Subgroup Problem; Discrete Fourier Transform; Abelian groups
\end{keywords}

\begin{amscode}
68Q12 (primary); 68Q09, 81P68 (secondary).
\end{amscode}

\begin{section}{Introduction}\label{uno}

\subsection{Notation}

Elements in $\f{2}{n}$ will be notated in bold to highlight their nature as vectors. The vector of zeros will be denoted as $\w{0}\in\f{2}{n}$, $\oplus$ will be used for the sum of vectors in $\f{2}{n}$ and $\cdot$ will denote the inner product in $\f{2}{n}$.

\ 

Given a set $S$, its indicator function will be noted as
$$
1_S(x) = \begin{cases}
    1 & \text{ if } x\in S\\
    0 & \text{ otherwise}.
\end{cases}
$$

For qubits we will use the standard bra--ket notation (as in \cite{niel,via,kaye}, for example).

\ 

We will use the notion of multisets, which are nothing but sets plus multiplicity. More formally, given a set $X$, a multiset on $X$ is a pair $(X,m)$, where 
$$
m: X \longrightarrow \mathbb{Z}_{\geq 0}
$$
is a function called multiplicity of the multiset. The basic idea is that of a set where we allow repetition of elements (including ignoring an element eventually if its multiplicity is $0$). The subset of elements with positive multiplicity is called the support of the multiset, noted $\sup(X,m)$. 

An extensive overview on multisets and their properties can be found in \cite{syro}. 

\subsection{A word on finite Abelian groups and quantum computing}

The main object of study will be finite Abelian groups (usually noted $G$ and $H$ with the operation $+$), whose elements we will need to represent as quantum states. For this purpose, given $G$, we will fix a string representation function, that is, an injective function 
$$
\vp_G : G\to\f{2}{n}
$$ 
such that the addition in $G$ can be constructed as an operator in the image $\img( \vp_G )$ and such that the identity element gets sent to $\w{0}\in\f{2}{n}$. From now on, we will abuse notation and write $G \subset \f{2}{n}$ instead of $\img(\vp_G)$ or $\vp_G(G)$. Likewise, $\ket{\w{g}}_G$, where $g\in G$, will represent the state $\ket{\vp_G(g)}_n$, and the operator in $\vp(G)$, which corresponds to addition in $G$, will be simply noted as $+$ or $+_G$ when clarification is needed. Although this operator only makes sense when applied to vectors in $\vp_G(G)\subset\f{2}{n}$, we can think of it as an operator defined for any vector in $\f{2}{n}$ that, in particular, codifies the addition of elements of $G$ in $\vp_G(G)$.

It is well known \cite{via} that for any Boolean function, $f:\f{2}{n}\to\f{2}{m}$, one can construct the quantum gate $\w{U}_f$ whose effect is:
$$
\w{U}_f \Big( \ket{\w{x}}_n \otimes \ket{\w{y}}_m \Big) = \ket{\w{x}}_n \otimes \ket{\w{y} \oplus f(\w{x})}_m.
$$

In the same spirit, this construction can be extended to generic functions $f: G \to H$ between arbitrary finite Abelian groups, via the previously described string representations.

The way this function is constructed is by first considering a function $\w{f}:\f{2}{n}\to\f{2}{m}$ that represents $f$, that is, one that verifies
$$
\w{f} (\w{g}) = \w{f}(\vp_G(g)) = \vp_H(f(g)), \quad \forall g \in G.
$$
$$
\xymatrix{
G \ar[r]^-{f}\ar[d]_-{\vp_G} & H \ar[d]^-{\vp_H} \\
\f{2}{n} \ar[r]^-{\w{f}} & \f{2}{m} } 
$$

Fixing one such $\w{f}$, we can now define $\w{U}_f$ as the quantum gate whose effect is:
$$
\w{U}_f\Big(\ket{\w{x}}_n\otimes\ket{\w{y}}_m\Big) = \ket{\w{x}}_n \otimes \ket{\w{y} +_H \w{f}(\w{x})}_m
$$
for any $\w{x} \in \f{2}{n}$ and any $\w{y} \in \f{2}{m}$. In particular, if $g\in G$ and $h\in H$, then:
$$
\begin{array}{rcl}
\w{U}_f\Big(\ket{\w{g}}_n\otimes\ket{\w{h}}_m\Big) &=& \ket{\w{g}}_n \otimes \ket{\vp_H(h)+_H\vp_H\big(f(g)\big)}_m \\[5pt]
&=& \ket{\w{g}}_n \otimes \ket{\vp_H\big(h+ f(g)\big)}_m = \ket{\w{g}}_n \otimes \ket{\w{h} + \w{f}(\w{g})}_m.
\end{array}
$$

To obtain this gate, we begin by taking two extra registers of $H$ and applying
$$
\w{U}_f\Big(\ket{\w{g}}_G\otimes\ket{\w{0}}_H\Big) = \ket{\w{g}}_G\otimes\ket{\w{f}(\w{g}) }_H
$$
if $g\in G$, and then we use the reversible gate corresponding to the operation in $\vp(H)$, $\w{U}_{+_H}$, yielding
$$
\w{U}_{+_H}\Big(\ket{\w{h}}_H\otimes\ket{\w{f(\w{g})}}_H\otimes\ket{\w{0}}_H\Big) = \ket{\w{h}}_H\otimes\ket{\w{f(\w{g})}}_H\otimes\ket{\w{h}+_H\w{f(\w{g})}}_H.
$$

Finally, we just need to discard the unnecessary registers. From this point onward, any $+$ is assumed to be $+_H$ unless stated otherwise.

\subsection{Characters}

Given a certain finite Abelian group $G$, a character of $G$ is just a group morphism $\chi:G \to S^1$. For characters of finite Abelian groups, references such as \cite{char} can be consulted, but a quick overview on the necessary results for this paper will be provided here. 

The characters of $G$ form a (multiplicative) group, denoted as $\widehat{G}$, with unit element $1_G$, called the trivial character, which is the constant function $1_G (g)= 1$, for all $g \in G$. The most interesting fact for our concerns is that $G$ and $\widehat{G}$ are isomorphic groups. This can be easily proved using the fact that any finite Abelian group is isomorphic to a product of cyclic groups. As this isomorphism is \textit{not} canonical, in order to simplify notation, we will assume that one such isomorphism has been considered and we will refer to the character identified with $g\in G$ as $\chi_g \in \widehat{G}$. In particular, $\chi_0 = 1_G$.

Given a character $\chi_g\in \widehat{G}$, its inverse element in the group is the character $\overline{\chi}_g$, which is the one defined as 
$$
\overline{\chi}_g(h) = (\chi_g(h))^{-1} = \chi_{g} \left( -h \right), \qquad \forall h\in G,
$$
the notation resembling the complex conjugation, as, being $\chi_g(h)$ unimodular,
$$
\chi_g(h)^{-1} = \overline{\chi_g(h)}.
$$

Given a subgroup $H\subset G$, one can consider 
$$
H^{\perp} = \left\{ \chi\in \widehat{G} \; \mid \; \chi(g) = 1, \text{ for all } g\in H \right\},
$$ 
which is a subgroup of $\widehat{G}$ with cardinality $|G|/|H|$. It should also be noted that, given a character $\chi\in\widehat{G}$ and a subgroup $H\subset G$, 
$$
\sum_{h\in H} \chi(h) = \begin{cases}
    |H| & \text{ if } \chi_{|H} = 1_H,\\
    0 & \text{ if } \chi_{|H} \neq 1_H.
\end{cases}
$$

Given then a function $f:G\to\mathbb{C}$, we can use characters to define its Fourier transform $\widehat{f}:\widehat{G}\to\mathbb{C}$ as
$$
\widehat{f}(\chi) = \sum_{g\in G} f(g)\overline{\chi}(g).
$$

An important result that will be used in Section \ref{cuatro} is that of the inverse Fourier transform: given a function $f:G\to\mathbb{C}$, we have
$$
\dhat{f} = |G|f.
$$

Also, when taking the Fourier transform of the indicator function of a subgroup, $H\subset G$, the result is:
$$
\widehat{1_{H}} (\chi) = \sum_{g\in G} 1_{H}(g)\chi(g) = \sum_{g\in H} \chi(g) = \begin{cases}
    |H| & \text{ if } \chi_{|H} = 1_{|H},\\
    0 & \text{ otherwise, }
\end{cases}
$$
that is, $\widehat{1_{H}} = |H|1_{H^{\perp}}$.

We will also assume that, for any Abelian group $G$, we can construct its Fourier transform as a quantum gate, $\w{F}_G$. The effect of such a gate is as follows:
$$
\w{F}_G\ket{\w{g}}_G = \frac{1}{\sqrt{|G|}}\sum_{z\in G}\chi_g(z)\ket{\w{z}}_G,
$$
and we will denote this state by $\ket{\chi_g}_G$.

In particular, if $G = \f{2}{n}$, then $\w{F}_G$ is the Hadamard gate. The behavior of the Fourier transform gate is thus analogous to that of the Haddamard gate: it transform an elements of the computational basis in superpositions of the elements of the computational basis with the same probability but (possibly) different and periodic amplitudes. In particular,
$$
\w{F}_G\ket{\w{0}}_G = \frac{1}{\sqrt{|G|}}\sum_{g\in G}\ket{\w{g}}_G.
$$

This gate was first introduced in \cite{abelian} and was later famously used by Peter Shor to solve the factorisation problem for integer numbers in \cite{shorpre}. A formal construction of this gate can be found in \cite{niel,via}, among others.

\subsection{GPK Algorithm}

The Generalised Phase-Kick Back or $\GPK$ algorithm for binary vector spaces $\f{2}{n}$, was introduced in \cite{gpk} and further studied in \cite{gpk2}. 
The following paragraphs will be devoted to providing a general overview on the $\GPK$ algorithm and its main applications. 

\ 

The algorithm itself takes as instances an arbitrary Boolean function, $f:\f{2}{n}\to\f{2}{m}$, and a binary string $\w{y}\in\f{2}{m}$, which will be known as the marker. It works as follows:

\vspace{5mm}

$\mathbb{STEP}$ $1$:
$$
\ket{\varphi_0}_{n,m} = \ket{\mathbf{0}}_n\otimes\ket{\mathbf{y}}_m.
$$

\vspace{3mm}

$\mathbb{STEP}$ $2$:
$$
\ket{\varphi_1}_{n,m} = \mathbf{H}_{n+m}\ket{\varphi_0}_{n,m} = \left(\displaystyle\frac{1}{\sqrt{N}}\displaystyle\sum_{\mathbf{x}\in\f{2}{n}} \ket{\mathbf{x}}_n\right)\otimes\ket{\gamma_{\mathbf{y}}}_m,
$$
where $\ket{\gamma_{\w{y}}}_m = \w{H}_n\ket{\w{y}}_m$ and $N = 2^n$.

\vspace{3mm}

$\mathbb{STEP}$ $3$:
$$
\ket{\varphi_2}_{n,m} = \mathbf{U}_f\ket{\varphi_1}_{n,m} = \left(\displaystyle\frac{1}{\sqrt{N}}\displaystyle\sum_{\mathbf{x}\in\f{2}{n}} (-1)^{\mathbf{y}\cdot f(\mathbf{x})}\ket{\mathbf{x}}_n\right)\otimes\ket{\gamma_{\mathbf{y}}}_m.
$$

\vspace{3mm}

$\mathbb{STEP}$ $4$:
$$
\ket{\varphi_3}_{n,m} = \left(\mathbf{H}^{\otimes n}\otimes \mathbf{I}^{\otimes m}\right)\ket{\varphi_2}_{n,m}.
$$
That is, 
$$
\ket{\varphi_3}_{n,m} = \displaystyle\frac{1}{N}\displaystyle\sum_{\mathbf{z}\in\f{2}{n}} \left(\displaystyle\sum_{\mathbf{x}\in\f{2}{n}} (-1)^{\mathbf{y}\cdot f(\mathbf{x})\oplus\mathbf{x}\cdot\mathbf{z}}\right)\ket{\mathbf{z}}_n\otimes\ket{\gamma_{\mathbf{y}}}_m.
$$

\vspace{3mm}

At this point, the second register can be discarded.

\vspace{3mm}

$\mathbb{STEP}$ $5$:
We measure the first register and name the result $\delta$.

\vspace{5mm}

One important application of the $\GPK$ is in solving the Hidden Subgroup Problem or HSP for vectorial Boolean functions, and, more particularly, Simon's problem \cite{simon}. The HSP will be further analysed in Section \ref{tres}, where its general version for Abelian groups will be tackled, but the solution and analysis to the Boolean function version of the problem can be found in \cite{gpk}. It should also be noted that, although the $\GPK$ is proved to be more efficient in solving this problem than Simon's algorithm, this improvement does not so far have any implications on the quantum complexity of the HSP or Simon's problem.

\ 

The other relevant application of the $\GPK$ is in solving what was defined in \cite{gpk2} as the Fully Balanced Image or FBI problem. The class of FBI functions was defined and analysed in \cite{waifi}, where it was shown that an FBI function is essentially a function $f:\f{2}{n}\to\f{2}{m}$ whose image is an affine space in $\f{2}{m}$ for which all pre-images have the same cardinality. The FBI problem is thus defined as the problem of, given one such function as a black box, determining the dimension of the given function image. 

One algorithm to solve this problem that employed the $\GPK$ as a subroutine was proposed in \cite{gpk2}, where it was shown that it was optimal among algorithms that follow a similar strategy. The efficiency of the algorithm was also analysed, and it was proven that it solved the problem in $\mathcal{O}\left(2^r(m-r+1)-1\right)$ applications of the $\GPK$, where $r$ is the desired dimension. The situation for Abelian groups will be treated in Section \ref{cuatro}.

\end{section}

\begin{section}{The Generalised Phase Kick-Back for Finite Abelian Groups}\label{dos}

This section will be devoted to defining and extending the idea of the Generalised Phase Kick-Back to finite Abelian groups. We will begin by studying the behavior of the Phase Kick-Back itself, so we can then apply it to the actual algorithm. Finally, we will show that the algorithm actually works as claimed.

\begin{lemma}\label{pk}
Let $G$ and $H$ be finite Abelian groups, and let $f:G\to H$ be a general function whose quantum gate is $\w{U}_f$. Then, any state of form $\ket{\w{g}}_G\otimes\ket{\w{\chi}_h}_H$, where $g\in G$ and $h\in H$, is an eigenvector of $\w{U}_f$ with eigenvalue $\overline{\chi_h}\left(f\left(g\right)\right)$.
\end{lemma}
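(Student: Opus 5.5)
We expand the state $\ket{\chi_h}_H$ using the definition of the Fourier gate and compute $\w{U}_f$ term by term. By definition,
$$
\ket{\w{g}}_G\otimes\ket{\chi_h}_H = \frac{1}{\sqrt{|H|}}\sum_{z\in H}\chi_h(z)\ket{\w{g}}_G\otimes\ket{\w{z}}_H .
$$
Since $g\in G$ and every $z\in H$, the computation at the end of the subsection on finite Abelian groups and quantum computing applies to each basis vector. It gives
$$
\w{U}_f\bigl(\ket{\w{g}}_G\otimes\ket{\w{z}}_H\bigr)=\ket{\w{g}}_G\otimes\ket{\w{z}+\w{f}(\w{g})}_H .
$$
By linearity, the image of our state is therefore
$$
\frac{1}{\sqrt{|H|}}\sum_{z\in H}\chi_h(z)\ket{\w{g}}_G\otimes\ket{\w{z+f(g)}}_H .
$$

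Next we reindex the sum with $w = z+f(g)$. Translation by $f(g)$ is a bijection of $H$, so $w$ runs over all of $H$, with $z = w - f(g)$. Because $\chi_h$ is a group morphism into $S^1$,
$$
\chi_h(w-f(g)) = \chi_h(w)\,\chi_h(-f(g)) = \chi_h(w)\,\overline{\chi_h}(f(g)),
$$
using the identity $\overline{\chi}_h(x)=\chi_h(-x)$ recalled in the subsection on characters. The factor $\overline{\chi_h}(f(g))$ does not depend on $w$, so it comes out of the sum. What remains is exactly $\ket{\w{g}}_G\otimes\ket{\chi_h}_H$, which proves the eigenvector claim with eigenvalue $\overline{\chi_h}(f(g))$.

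The only delicate point is that $\w{U}_f$ is defined on all of $\f{2}{m}$, while the addition formula is only guaranteed on $\vp_H(H)$. This causes no trouble, because $\ket{\chi_h}_H$ is supported only on basis states $\ket{\w{z}}_H$ with $z\in H$. Likewise, $\w{f}(\w{g})=\vp_H(f(g))$ holds precisely because $g\in G$. With these two checks noted, the argument is a routine reindexing.
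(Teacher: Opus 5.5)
Your proof is correct and matches the paper's argument: expand $\ket{\chi_h}_H$ in the computational basis, apply $\w{U}_f$ term by term, reindex via the translation $z\mapsto z+f(g)$, and factor out $\chi_h(-f(g))=\overline{\chi_h}(f(g))$. Your extra check that every basis state involved lies in $\vp_H(H)$ is a point the paper leaves implicit, and it is a welcome addition.
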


\begin{proof}
We have that
$$
\ket{\w{g}}_G\otimes\ket{\w{\chi}_h}_H = \ket{\w{g}}_G\otimes\left(\frac{1}{\sqrt{|H|}}\sum_{x\in H}\chi_h(x)\ket{\w{x}}_H\right),
$$
and so if we apply $\w{U}_f$, we get
$$
\w{U}_f\left(\ket{\w{g}}_G\otimes\ket{\w{\chi}_h}_H\right) = \ket{\w{g}}_G\otimes\left(\frac{1}{\sqrt{|H|}}\sum_{x\in H}\chi_h(x)\ket{\w{x}+f(\w{g})}_H\right).
$$
Renaming now $x = y - f(g)$, we get the state:
$$
\ket{\w{g}}_G\otimes\left(\frac{1}{\sqrt{|H|}}\sum_{y\in H}\chi_h\big(y-f(g)\big)\ket{\w{y}}_H\right)
$$
$$
= \ket{\w{g}}_G\otimes\left(\overline{\chi_h}\big(f(g)\big)\frac{1}{\sqrt{|H|}}\sum_{y\in H}\chi_h(y)\ket{\w{y}}_H\right) = \overline{\chi_h}\big(f(g)\big)\Big(\ket{\w{g}}_G\otimes\ket{\w{\chi}_h}_H\Big).
$$
\end{proof}

Let us now move to the version of the algorithm to finite Abelian groups.

\begin{definition}{(Generalised Phase Kick-Back algorithm)}
Let $G, H$ be Abelian groups, and let  $f:G\to H$ be a function. The $\GPK$ algorithm applied to the function $f$ with marker $h\in H$---or rather $\chi_h\in \widehat{H}$---is the following one.

\vspace{5mm}

$\mathbb{STEP}$ $1$

$\ket{\varphi_0}_{G,H} = \ket{\mathbf{0}}_G\otimes\ket{\mathbf{h}}_H.$

\vspace{5mm}

$\mathbb{STEP}$ $2$

$\ket{\varphi_1}_{G,H} = \Big(\mathbf{F}_{G}\otimes\w{F}_{H}\Big)\ket{\varphi_0}_{G,H} = \left(\displaystyle\frac{1}{\sqrt{|G|}}\displaystyle\sum_{g\in G} \ket{\mathbf{g}}_G\right)\otimes\ket{\w{\chi}_h}_H.$

\vspace{5mm}

$\mathbb{STEP}$ $3$

$\ket{\varphi_2}_{G,H} = \mathbf{U}_f\ket{\varphi_1}_{G,H} = \left(\displaystyle\frac{1}{\sqrt{|G|}}\displaystyle\sum_{g\in G} \overline{\chi_h}\big(f(g)\big)\ket{\mathbf{g}}_G\right)\otimes\ket{\w{\chi}_h}_H.$

\vspace{5mm}

$\mathbb{STEP}$ $4$

$\ket{\varphi_3}_{G,H} = \left(\mathbf{F}_G\otimes \mathbf{I}^{\otimes m}\right)\ket{\varphi_2}_{G,H} = \displaystyle\frac{1}{|G|}\displaystyle\sum_{z\in G} \left(\displaystyle\sum_{g\in G} \overline{\chi_h}\big(f(g)\big)\chi_g(z)\right)\ket{\mathbf{z}}_G\otimes\ket{\w{\chi}_h}_H.$

\vspace{5mm}

At this point, the second register can be discarded.

\vspace{5mm}

$\mathbb{STEP}$ $5$

We measure the first register and name the result $\delta$.

\end{definition}

We have made some claims in the previous definition regarding the final states after performing the corresponding operations. In particular, we have stated that the final amplitude of the quantum state associated with an element of the Abelian group $z\in G$ is:
$$
\alpha_z = \frac{1}{|G|}\displaystyle\sum_{g\in G} \overline{\chi_h}\big(f(g)\big)\chi_g(z).
$$

Let us show that this is actually the case.

\begin{theorem}{(Correctness of the $\GPK$ algorithm.)}
    The amplitude of the quantum state associated with an element of the Abelian group $z\in G$ after applying the $\GPK$ under the previously stated circumstances is
    $$
    \alpha_z = \frac{1}{|G|}\displaystyle\sum_{g\in G} \overline{\chi_h}\big(f(g)\big)\chi_g(z).
    $$
\end{theorem}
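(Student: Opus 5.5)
The plan is to follow the state through the five steps, using the definition of $\w{F}_G$, linearity, and Lemma~\ref{pk}.

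First I verify $\mathbb{STEP}$ 2. By the definition of the Fourier gate, $\w{F}_G\ket{\w{0}}_G = \frac{1}{\sqrt{|G|}}\sum_{g\in G}\chi_0(g)\ket{\w{g}}_G$. Since $\chi_0 = 1_G$ is the trivial character, this is the uniform superposition. Likewise $\w{F}_H\ket{\w{h}}_H = \ket{\w{\chi}_h}_H$ holds by definition. Because $\w{F}_G\otimes\w{F}_H$ acts factorwise, this gives $\ket{\varphi_1}_{G,H}$ exactly as claimed.

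Next comes $\mathbb{STEP}$ 3. I expand
$$
\ket{\varphi_1}_{G,H} = \frac{1}{\sqrt{|G|}}\sum_{g\in G}\ket{\w{g}}_G\otimes\ket{\w{\chi}_h}_H .
$$
Each summand is an eigenvector of $\w{U}_f$ with eigenvalue $\overline{\chi_h}(f(g))$, by Lemma~\ref{pk}. Since $\w{U}_f$ is linear, each term simply acquires its phase, and factoring out the common second register yields $\ket{\varphi_2}_{G,H}$. This is the conceptual core of the argument. The only subtlety is that $\w{U}_f$ is defined on all of $\f{2}{n}\otimes\f{2}{m}$, whereas only basis states $\ket{\w{g}}$ with $g\in G$, and $\ket{\w{x}}$ with $x\in H$ inside $\ket{\w{\chi}_h}$, appear here. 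On these states the lemma applies verbatim.

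Then comes $\mathbb{STEP}$ 4. I apply $\w{F}_G$ to each $\ket{\w{g}}_G$:
$$
\w{F}_G\ket{\w{g}}_G = \frac{1}{\sqrt{|G|}}\sum_{z\in G}\chi_g(z)\ket{\w{z}}_G .
$$
Substituting this gives a double sum with prefactor $\frac{1}{\sqrt{|G|}}\cdot\frac{1}{\sqrt{|G|}} = \frac{1}{|G|}$. Exchanging the finite sums over $g$ and $z$ and collecting the coefficient of $\ket{\w{z}}_G\otimes\ket{\w{\chi}_h}_H$ gives exactly $\alpha_z$. The state is a product state whose second factor is independent of $z$, so discarding the second register leaves $\sum_z \alpha_z\ket{\w{z}}_G$. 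Hence $\alpha_z$ is the amplitude measured in $\mathbb{STEP}$ 5.

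There is no genuine obstacle here. The only step needing care is the bookkeeping in $\mathbb{STEP}$ 4: keeping $\chi_g(z)$, rather than $\chi_z(g)$, in the order the gate's definition dictates, and checking that the normalisation constants combine to $1/|G|$.
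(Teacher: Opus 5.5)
Your proposal is correct and follows the paper's own argument: Steps 2 and 3 follow from the definition of $\w{F}_G$ and Lemma~\ref{pk}, and the amplitude is read off by applying $\w{F}_G$ linearly to each $\ket{\w{g}}_G$ and exchanging the two finite sums, with the normalisations combining to $1/|G|$. Your extra checks only make explicit what the paper calls trivial: the uniform superposition coming from $\chi_0 = 1_G$, the remark on the domain of $\w{U}_f$, and the product structure that justifies discarding the second register.
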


\begin{proof}
    The final superpositions after steps $2$ and $3$ are trivial thanks to Lemma \ref{pk} and the definition of $\w{F}_G$. To prove the statement, we must thus focus on the result of the operation:
    $$
    \w{F}_G\frac{1}{\sqrt{|G|}}\displaystyle\sum_{g\in G} \overline{\chi_h}\big(f(g)\big)\ket{\mathbf{g}}_G = \frac{1}{\sqrt{|G|}}\displaystyle\sum_{g\in G} \overline{\chi_h}\big(f(g)\big)\w{F}_G\ket{\mathbf{g}}_G
    $$
    $$
    = \frac{1}{|G|}\displaystyle\sum_{g\in G} \overline{\chi_h}\big(f(g)\big)\Big(\sum_{z\in G}\chi_g(z)\ket{\w{z}}_G \Big)
    $$
    and by simple rearrangement we get:
    $$
    \frac{1}{|G|}\displaystyle\sum_{z\in G} \left(\displaystyle\sum_{g\in G} \overline{\chi_h}\big(f(g)\big)\chi_g(z)\right)\ket{\mathbf{z}}_G
    $$
\end{proof}

In particular, in Section \ref{cuatro} we will need to analyse the amplitude of the state $\ket{\w{0}}_G$, which is:
$$
\alpha_0 = \frac{1}{|G|}\sum_{g\in G} \overline{\chi_h}\big(f(g)\big).
$$
\end{section}

\begin{section}{The Hidden Subgroup Problem for finite Abelian Groups}\label{tres}

Let us now move to study the application of the $\GPK$ algorithm to a very relevant problem: the Hidden Subgroup Problem (HSP). For information on this problem and the current proposed solutions to it, \cite{hsp,hsp2,calixto} can be consulted.

This section will be structured as follows. Firstly, an overview of the HSP and of Simon's algorithm will be provided. Simon's algorithm for Abelian groups is a general version of Shor's algorithm, and it is an efficient algorithm that solves the HSP for the Abelian situation.

Secondly, we will show how the $\GPK$ behaves when targeting a function $f:G\to H$ that satisfies the HSP condition. Thirdly, it will be shown that the $\GPK$ algorithm is slightly better in a strict sense than Simon's algorithm in solving this problem. Unfortunately, this improvement is a small one, and does not have implications on the quantum complexity of the HSP. 

Finally, a few remarks on the possible future lines of research regarding the $\GPK$ and the HSP will be made.

\begin{subsection}{The Hidden Subgroup Problem for Abelian Groups}

We shall begin by recalling the statement of the Hidden Subgroup Problem in the finite Abelian group setting.

\begin{definition}{(The Hidden Subgroup Problem)}
Let $G$ be a finite Abelian group and $X$ a set. A function $f:G\to X$ is said to satisfy the hidden subgroup condition if there exists a subgroup $S\leq G$ such that
$$
f(g+s) = f(g)
$$
for all $g\in G$ and $s\in S$. In other words, $f$ can be defined coherently over $G/S$.

We call the Hidden Subgroup Problem for the function $f$ to that of obtaining $S$ using $f$ as a black box.
\end{definition}

\begin{remark}
Due to the nature of the $\GPK$, we need to ask for a further condition on $X$: we need it to be a finite Abelian group in order to apply the algorithm. 

However, this is no hindrance at all: as the condition and the problem themselves are independent of the group structure in the image set, we can simply define a finite Abelian group structure on $X$ (say, make it into a cyclic group of $\#(X)$ elements) and merrily go along.

Anyway, $X$ will indeed be a group for most of the applications of the HSP, the most famous one being the solution to the discrete logarithm problem using Shor's algorithm.
\end{remark}

Regarding the classical complexity of the HSP, it is well known that there are many instances whose associated decision problem is in the complexity class $\w{NP}$, such as the discrete logarithm problem. For more information on the classical complexity of this problem, \cite{hsp2} can be consulted.

\

The current strategy for solving the HSP for finite Abelian groups is a generalisation of Simon's algorithm, so we will take a small detour to quickly summarise the inner workings of this algorithm, as later we will compare it to the $\GPK$ based one.

Let $f:G\to H$ be the target function, we begin with two registers:
$$
\ket{\psi_1} = \ket{\w{0}}_G\otimes\ket{\w{0}}_H,
$$
so we can apply the Fourier transform gate, $\w{F}_G$, to the first register:
$$
\ket{\psi_2} = \left(\w{F}_G\otimes\w{I}_H\right)\ket{\psi_1} = \left(\frac{1}{\sqrt{|G|}}\sum_{g\in G}\ket{\w{g}}_G\right)\otimes\ket{\w{0}}_H.
$$

We continue by applying the oracle gate for $f$, $\w{U}_f$:
$$
\ket{\psi_3} = \w{U}_f\ket{\psi_2} = \frac{1}{\sqrt{|G|}}\sum_{g\in G} \ket{\w{g}}_G\otimes\ket{f(\w{g})}_H,
$$
and knowing that $f$ satisfies the HSP condition for some subgroup $S$ of $G$, we can rewrite this expression as
$$
\ket{\psi_3} = \frac{1}{\sqrt{|G|}}\sum_{g\in R}\sum_{s\in S} \ket{\w{g}+\w{s}}_G\otimes\ket{f(\w{g})}_H,
$$
where $R$ is a set of representatives for $G/S$. We finish by applying $\w{F}_G$ to the first register and measuring the second one. Let $g$ be such that we obtain $f(g)$ from the measurement, then the resulting superposition would be the following:
$$
\ket{\psi_4}_G = \w{F}_G\left(\frac{1}{\sqrt{|G|}}\sum_{s\in S} \ket{\w{g}+\w{s}}_G\right) = \frac{1}{|G|}\sum_{s\in S}\sum_{z\in G} \chi_{g+s}(z)\ket{\w{z}}_G
$$
$$
= \frac{1}{|G|}\sum_{z\in S}\chi_{g}(z)\left(\sum_{s\in G} \chi_s(z)\right)\ket{\w{z}}_G.
$$

Using now the fact that
$$
\sum_{s\in G} \chi_s(z) = \begin{cases} |S| & \text{if } \chi_z\in S^{\perp}\\
0 & \text{if } \chi_z\not\in S^{\perp}\end{cases},
$$
it is clear that the $\ket{\psi_4}$ is a superposition of states, $\ket{\w{z}}_G$ that satisfy $\chi_z\in S^{\perp}$ all of them with the same probability $|S|/|G|$ of being obtained after measuring the first register. The idea know is to repeat this process enough times until we get enough such $z$'s so that we can calculate $S$ by solving a linear system of equations.

\end{subsection}

\begin{subsection}{A $\GPK$ based algorithm}

Before stating the marker selection strategy that we will follow in the calls that we will make to the $\GPK$, let us take a moment to see what happens when we apply the $\GPK$ to a function $f:G\to H$ that satisfies the hidden subgroup condition.

\begin{lemma}
Let $G$ and $H$ be finite Abelian groups and let $f:G\to H$ be a function that satisfies the hidden subgroup condition. The result of applying the $\GPK$ to $f$ is a superposition of states $\ket{\w{z}}_G$ such that $\chi_z\in S^\perp$.
\end{lemma}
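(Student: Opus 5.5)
The plan is to start from the amplitude formula given by the Correctness Theorem for the $\GPK$,
$$
\alpha_z = \frac{1}{|G|}\sum_{g\in G} \overline{\chi_h}\big(f(g)\big)\chi_g(z),
$$
and show that $\alpha_z = 0$ whenever $\chi_z\notin S^\perp$. Since the marker $h$ is arbitrary, this argument must not depend on it. The only property it should use is that $g\mapsto \overline{\chi_h}(f(g))$ is constant on cosets of $S$.

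First, I fix a set $R$ of representatives of $G/S$, exactly as in the description of Simon's algorithm, and write every $g\in G$ uniquely as $g=r+s$ with $r\in R$ and $s\in S$. By the hidden subgroup condition, $f(r+s)=f(r)$. The sum then splits as
$$
\alpha_z = \frac{1}{|G|}\sum_{r\in R}\overline{\chi_h}\big(f(r)\big)\sum_{s\in S}\chi_{r+s}(z).
$$

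Next, I use the fixed isomorphism $G\cong\widehat{G}$, $g\mapsto\chi_g$. Because it is a group morphism, $\chi_{r+s}=\chi_r\chi_s$. I will also use the symmetry $\chi_g(z)=\chi_z(g)$, which I justify by choosing the isomorphism in the standard way from a decomposition of $G$ into cyclic factors, as the paper implicitly does. This gives
$$
\sum_{s\in S}\chi_{r+s}(z) = \chi_r(z)\sum_{s\in S}\chi_z(s).
$$
The quoted orthogonality relation for the subgroup $S$ says this inner sum equals $|S|$ if $\chi_z|_S=1_S$, that is, if $\chi_z\in S^\perp$, and equals $0$ otherwise. Hence every $z$ with $\chi_z\notin S^\perp$ has $\alpha_z=0$. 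The final state is therefore a superposition of basis states $\ket{\w{z}}_G$ with $\chi_z\in S^\perp$, with amplitudes $\frac{|S|}{|G|}\sum_{r\in R}\overline{\chi_h}(f(r))\chi_r(z)$.

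The main obstacle is the symmetry step $\chi_g(z)=\chi_z(g)$. The paper only fixes \emph{some} isomorphism $G\to\widehat{G}$, and for an arbitrary isomorphism this symmetry can fail. I would settle it by taking $G=\prod\mathbb{Z}_{n_i}$ with $\chi_g(z)=\exp\big(2\pi i\sum_i g_iz_i/n_i\big)$, which is manifestly symmetric, and note that this is the isomorphism realised by $\w{F}_G$.

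If I instead wanted to avoid symmetry altogether, I would have to interpret the condition ``$\chi_z\in S^\perp$'' through the pairing $z\mapsto(s\mapsto\chi_s(z))$. With that reading, the inner sum $\sum_{s\in S}\chi_s(z)$ vanishes directly by orthogonality applied to the character $s\mapsto\chi_s(z)$ of $S$. In either formulation the computation is the same.
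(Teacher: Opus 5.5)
Your proof is correct and follows the paper's argument essentially line by line: you split the sum over coset representatives $R$ of $G/S$, use $f(r+s)=f(r)$ and $\chi_{r+s}=\chi_r\chi_s$, rewrite $\sum_{s\in S}\chi_s(z)$ as $\sum_{s\in S}\chi_z(s)$, and apply orthogonality on $S$. The paper's proof uses the symmetry $\chi_s(z)=\chi_z(s)$ silently, so your remark that it requires the standard (symmetric) identification $G\cong\widehat{G}$, or else reading $S^\perp$ through the pairing $s\mapsto\chi_s(z)$, is a legitimate refinement of that proof rather than a different route.
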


\begin{proof}
We know that, in general, the amplitude of the state $\ket{\w{z}}$ after applying the $\GPK$ to $f$ using $\chi_h\in\widehat{H}$ as a marker is
$$
\alpha_z = \frac{1}{|G|}\displaystyle\sum_{g\in G} \overline{\chi_h}\big(f(g)\big)\chi_g(z).
$$
Let now $R$ be a set of representatives of $G/S$, and using that $f(g) = f(g+s)$ for any $s\in S$, we can rewrite the previous expression as
$$
\alpha_z = \frac{1}{|G|}\displaystyle\sum_{g\in R}\sum_{s\in S} \overline{\chi_h}\big(f(g)\big)\chi_{g+s}(z) =\frac{1}{|G|}\displaystyle\sum_{g\in R} \overline{\chi_h}\big(f(g)\big)\sum_{s\in S} \chi_{g+s}(z),
$$
and using character properties we can see that
$$
\sum_{s\in S} \chi_{g+s}(z) = \sum_{s\in S} \chi_{g}(z)\chi_s(z) = \chi_g(z)\sum_{s\in S}\chi_s(z) = \chi_g(z)\sum_{s\in S}\chi_z(s)
$$
As $S$ is a subgroup, we know that
$$
\sum_{s\in S}\chi_z(s) = \begin{cases}
|S| & \text{if } \chi_z\in S^\perp\\
0 & \text{if } \chi_z\not\in S^\perp,
\end{cases}
$$
which implies that $\alpha_z = 0$ if $\chi_z\not\in S^\perp$.
\end{proof}

So, summing up, we get a superposition of the same states as in Simon's algorithm, but now the probabilities associated to each of the states are not necessarily the same, as it depends on the chosen marker $\chi_h$. 

The key now is to construct a selection strategy for the markers that maximises the probability of obtaining independent elements of $S^\perp$. The strategy that we will present is neither the most elegant nor the optimal one, but we will show that it allows for better probabilities of obtaining a new relevant element of $S^\perp$: we will simply choose a nontrivial marker completely at random, thus eliminating the possibility of $h$ being $0$ and reducing the total probability of obtaining $\chi_0$ after each round of the $\GPK$.

\vspace{4mm}

\textbf{$\GPK$ Algorithm for the HSP:}

\vspace{2mm}

$\mathbb{STEP}$ $1$:

\vspace{2mm}

Choose $\chi_h\in\widehat{H}\setminus\{\chi_0\}$ at random.

\vspace{2mm}

$\mathbb{STEP}$ $2$:

\vspace{2mm}

Apply $\GPK(f,\chi_h)$ and save the result, $\w{z}$, as $\chi_z\in S^\perp$.

\vspace{2mm}

$\mathbb{STEP}$ $3$:

\vspace{2mm}

Calculate $S$ from $S^\perp$ when enough elements in $S^\perp$ are gathered.

\end{subsection}

\begin{subsection}{Comparison}

We will now proceed to analyse and compare the performance of these two algorithms. In particular, we will focus on the probability of obtaining a new relevant element of $S^\perp$ in each of the iterations of the algorithms.

We have already established that in the case of Simon's algorithm for finite Abelian groups the probability of obtaining each element in $S^\perp$ is the same and equal to $|S|/|G|$. What happens then in the $\GPK$ inspired algorithm?

Let us first assume that, in Step 1, the marker is chosen completely at random in $\widehat{H}$ instead of in $\widehat{H}\setminus\{\chi_0\}$.

\begin{lemma}\label{equal}
Let $G$ and $H$ be Abelian groups and let $f:G\to H$ be an HSP function with hidden subgroup $S\subset G$. Then, the probability of obtaining $\chi_z\in\widehat{G}$ after choosing a marker $\chi_h\in\widehat{H}$ at random and performing $\GPK(f,\chi_h)$ is:
$$
p(z) = \begin{cases}
    |S|/|G| & \text{if } \chi_z\in S^\perp\\
    0 & \text{if } \chi_z\not\in S^\perp.
\end{cases}
$$
\end{lemma}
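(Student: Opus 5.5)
The plan is to compute $p(z)$ as an average over markers, $p(z)=\frac{1}{|H|}\sum_{h\in H}|\alpha_z(h)|^2$. Here $\alpha_z(h)$ is the amplitude from the correctness theorem for the $\GPK$ with marker $\chi_h$. By the previous lemma, $\alpha_z(h)=0$ whenever $\chi_z\notin S^\perp$, so $p(z)=0$ in that case. It remains to treat $\chi_z\in S^\perp$.

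First I will write $\alpha_z(h)$ as a Fourier coefficient. Using the identification $\chi_g(z)=\chi_z(g)$, which I assume the paper uses since it swaps these freely in the previous lemma, we have
$$
\alpha_z(h)=\frac{1}{|G|}\sum_{g\in G}\overline{\chi_h}\big(f(g)\big)\chi_z(g).
$$
Expanding $|\alpha_z(h)|^2$ as a double sum over $g,g'\in G$ gives
$$
|\alpha_z(h)|^2=\frac{1}{|G|^2}\sum_{g,g'\in G}\overline{\chi_h}\big(f(g)-f(g')\big)\chi_z(g-g').
$$
I then average over $h\in H$. By orthogonality of characters in the marker variable,
$$
\frac{1}{|H|}\sum_{h\in H}\chi_h(x)=1_{\{0\}}(x).
$$
This is the subgroup sum formula from the characters subsection applied with the roles of group and dual exchanged, via $G\cong\widehat G$. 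Hence only the pairs with $f(g)=f(g')$ survive, and
$$
p(z)=\frac{1}{|G|^2}\sum_{\substack{g,g'\in G\\ f(g)=f(g')}}\chi_z(g-g').
$$

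The main obstacle is that the hidden subgroup condition only says $f$ is constant on cosets of $S$. The expected result needs $f(g)=f(g')$ \emph{exactly} when $g-g'\in S$, that is, $f$ injective on $G/S$. I will take this as the intended (standard) reading of the HSP condition, since otherwise $S$ is not determined by $f$ and the lemma can fail (for constant $f$, for instance). Under this reading, the surviving pairs are $g'=g-s$ with $s\in S$. Since $\chi_z\in S^\perp$, each term equals $\chi_z(s)=1$. There are $|G|\,|S|$ such pairs, so
$$
p(z)=\frac{|G|\,|S|}{|G|^2}=\frac{|S|}{|G|}.
$$
As a sanity check, summing over the $|S^\perp|=|G|/|S|$ elements gives total probability $1$.
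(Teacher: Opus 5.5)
Your proof is correct and follows essentially the paper's argument: average $|\alpha_z(h)|^2$ over the markers, expand the square as a double sum over $g,g'\in G$, and use orthogonality of the characters in $\widehat{H}$ to keep only the pairs with $f(g)=f(g')$. Your use of the previous lemma for the case $\chi_z\notin S^\perp$ is a harmless shortcut. You are also right that passing from $f(g)=f(g')$ to $g-g'\in S$ requires $f$ to separate the cosets of $S$. The paper uses this silently, even though its stated hidden subgroup condition only asks that $f$ be constant on those cosets.
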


\begin{proof}
We know that the probability of obtaining $\chi_z\in\widehat{G}$ after an iteration of the $\GPK$ algorithm is
$$
p(z) = \frac{1}{|H|}\sum_{\chi_h\in \widehat{H}} \left|\alpha_{h}(z)\right|^2,
$$
where $\alpha_h(z)$ is the amplitude of $\ket{\w{z}}_G$ in the final superposition of the $\GPK$ using $\chi_h$ as a marker. We know that 
$$
\alpha_{h}(z) = \frac{1}{|G|}\displaystyle\sum_{g\in G} \overline{\chi_h}\big(f(g)\big)\chi_g(z),
$$
so we get that
$$
\begin{array}{rcl}
p(z) &=& \displaystyle \frac{1}{|H||G|^2} \sum_{\chi_h\in \widehat{H}}\left|\sum_{g\in G} \overline{\chi_h}\big(f(g)\big)\chi_g(z)\right|^2
\\ \\
&=& \displaystyle \frac{1}{|H||G|^2} \sum_{\chi_h\in \widehat{H}}\left(\sum_{g\in G} \overline{\chi_h}\big(f(g)\big)\chi_g(z)\right)\left(\sum_{g'\in G} \chi_h\big(f(g')\big)\overline{\chi_{g'}}(z)\right)
\\ \\
&=& \displaystyle \frac{1}{|H||G|^2} \sum_{\chi_h\in\widehat{H}} \sum_{g,g'\in G} \overline{\chi_h}\big(f(g)-f(g')\big)\chi_{g-g'}(z)
\\ \\
&=& \displaystyle \frac{1}{|H||G|^2}\sum_{g,g'\in G} \chi_{g-g'}(z) \sum_{\chi_h\in\widehat{H}}\overline{\chi_h}\big(f(g)-f(g')\big)
\end{array}
$$
If we focus just on the last sum,
$$
\sum_{\chi_h\in\widehat{H}}\overline{\chi_h}\big(f(g)-f(g')\big) = \begin{cases}
    0 & \text{if } f(g)\neq f(g')\\
    |H| & \text{if } f(g) = f(g'),
\end{cases}
$$
and as $f$ is a function satisfying the HSP condition, we get that
$$
p(z) = \frac{1}{|G|^2}\sum_{g\in G}\sum_{s\in S}\chi_s(z) = \frac{1}{|G|} \sum_{s\in S}\chi_s(z) = \begin{cases}
    |S|/|G| & \text{if } \chi_z\in S^\perp\\
    0 & \text{if } \chi_z\not\in S^\perp.
\end{cases}
$$
\end{proof}

However, if the chosen marker is $\chi_0\in\widehat{H}$, it is clear that the final superposition in the $\GPK$ is just the state $\ket{\w{0}}_G$, as
$$
\alpha_{\w{0}} =  \frac{1}{|G|}\displaystyle\sum_{g\in G} \overline{\chi_0}\big(f(g)\big) = \frac{|G|}{|G|} = 1.
$$

This implies that, if we were to use $\chi_0$ as a marker, then we would get the trivial character in $S^\perp$, which does not provide any valuable information. By eliminating the possibility of choosing $\chi_0$ as a marker we are reducing the probability of obtaining $\chi_0$ as a result after the iteration, which in turn improves the probability of obtaining a relevant character after each iteration.

\begin{theorem}\label{unequal}
Let $f:G\to H$ be finite Abelian group function satisfying the HSP condition with secret subgroup $S\subset G$. The probability of obtaining $\chi_z\in\widehat{G}$ as a result in each iteration of the $\GPK$ for the HSP using a random marker $\chi_h\in\widehat{H}\setminus\{\chi_0\}$ is:
$$
p(\w{z}) = \begin{cases}
    \displaystyle \frac{|S||H|-|G|}{|G|(|H|-1)} & \text{if } z = 0\\[12pt]
    \displaystyle \frac{|S||H|}{|G|(|H|-1)} & \text{if } \chi_z\in S^\perp\\[12pt]
    0 & \text{if } \chi_z\not\in S^\perp.
\end{cases}
$$
\end{theorem}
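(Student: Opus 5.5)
The plan is to derive the distribution for markers in $\widehat{H}\setminus\{\chi_0\}$ by subtracting the single contribution of $\chi_0$ from the full average of Lemma \ref{equal}. Write $p_h(z)=|\alpha_h(z)|^2$ for the probability of obtaining $\chi_z$ when $\GPK(f,\chi_h)$ is run with a fixed marker $\chi_h$. Lemma \ref{equal} says precisely that
$$
\frac{1}{|H|}\sum_{\chi_h\in\widehat{H}} p_h(z) = \frac{|S|}{|G|}\,1_{S^\perp}(\chi_z).
$$
Choosing the marker uniformly in $\widehat{H}\setminus\{\chi_0\}$ instead gives
$$
p(z)=\frac{1}{|H|-1}\sum_{\chi_h\neq\chi_0} p_h(z) = \frac{1}{|H|-1}\left(|H|\frac{|S|}{|G|}1_{S^\perp}(\chi_z) - p_0(z)\right).
$$

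It then remains to compute $p_0(z)$. The remark after Lemma \ref{equal} shows that for the trivial marker $\alpha_{\w{0}}=1$, so the final state is exactly $\ket{\w{0}}_G$. Hence $p_0(z)=1$ if $z=0$ and $p_0(z)=0$ otherwise.

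Substituting this, we split into three cases.
\begin{itemize}
\item For $z=0$, we have $\chi_0\in S^\perp$, so
$$
p(0)=\frac{|H||S|/|G|-1}{|H|-1}=\frac{|S||H|-|G|}{|G|(|H|-1)}.
$$
\item For $\chi_z\in S^\perp$ with $z\neq 0$, we get
$$
p(z)=\frac{|S||H|}{|G|(|H|-1)}.
$$
\item For $\chi_z\notin S^\perp$, both terms in the bracket vanish, so $p(z)=0$. Alternatively, this case follows directly from the earlier lemma stating that every $\GPK$ output lies in $S^\perp$.
\end{itemize}
The middle case of the statement should be read as $\chi_z\in S^\perp\setminus\{\chi_0\}$.

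There is no real obstacle here. The only points needing care are that the probability under a uniformly random marker is the average of the fixed-marker probabilities (which is how $p(z)$ was set up in the proof of Lemma \ref{equal}), and that $\chi_0$ is correctly identified with $z=0$ via the fixed isomorphism $G\cong\widehat{G}$. As a sanity check, the values should sum to $1$ over the $|G|/|S|$ elements of $S^\perp$:
$$
\frac{|S||H|-|G|+\bigl(|G|/|S|-1\bigr)|S||H|}{|G|(|H|-1)}=\frac{|G||H|-|G|}{|G|(|H|-1)}=1.
$$
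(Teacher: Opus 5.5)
Your proposal is correct and follows essentially the same route as the paper: take the full average $\frac{1}{|H|}\sum_{\chi_h}p_h(z)=\frac{|S|}{|G|}1_{S^\perp}(\chi_z)$ from Lemma \ref{equal}, subtract the trivial marker's contribution $p_0(z)$ (which is $1$ for $z=0$ and $0$ otherwise), and renormalise by $|H|-1$. Your explicit handling of the case $\chi_z\notin S^\perp$, your reading of the middle case as $\chi_z\in S^\perp\setminus\{\chi_0\}$, and your check that the probabilities sum to $1$ are small additions that the paper leaves implicit.
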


\begin{proof}
Let $p_h(z)$ is the probability of getting $\chi_z\in\widehat{G}$ as a result after applying the $\GPK$ using $\chi_h\in\widehat{H}$ as a marker. Following the notation and calculations of Lemma \ref{equal}, we showed that
$$
\frac{1}{|H|}\sum_{\chi_h\in\widehat{H}} p_h(z) = \frac{|S|}{|{G|}}
$$
if $\chi_z\in S^\perp$. Using now the fact that $p_0(0) = 1$, we can see that
$$
\sum_{\chi_h\in\widehat{H}\setminus\{0\}} p_h(0) = \frac{|S||H|}{|G|}-1 = \frac{|S||H|-|G|}{|G|},
$$
which implies that the probability of obtaining $\chi_0$ when the random selection is among non-trivial characters is
$$
\frac{1}{|H|-1}\sum_{\chi_h\in\widehat{H}\setminus\{0\}} p_h(0) = \frac{|S||H|-|G|}{|G|(|H|-1)}.
$$
Analogously, for any $z$ such that $\chi_z\in S^\perp$, we know that $p_0(z) = 0$, so the probability of obtaining any such $\chi_z$ after an iteration of the $\GPK$ using random selection among non-trivial characters is
$$
\frac{1}{|H|-1}\sum_{\chi_h\in\widehat{H}\setminus\{0\}} p_h(z) = \frac{1}{|H|-1}\sum_{\chi_h\in\widehat{H}} p_h(z) = \frac{|S||H|}{|G|(|H|-1)}.
$$
\end{proof}

Thus, we are essentially busting up the probability of obtaining non-trivial characters by a factor of $|H|/(|H|-1)$. Furthermore, the random selection strategy that we have analysed here is rather naive, so a more elegant strategy might offer even more improvement.

\end{subsection}

\end{section}

\begin{section}{FBI Functions for Finite Abelian Groups}\label{cuatro}

In the introduction, it was mentioned that one of the more relevant applications of the $\GPK$ algorithm arose when it was fed with functions from a relevant class that was defined and studied in \cite{waifi}: fully balanced image functions or FBI functions.

In this section, we will generalise the concept of FBI functions to the Abelian groups situation, proving a relevant characterization which will then be used to solve the FBI problem of determining the dimension of the function image using the $\GPK$ algorithm.

\begin{subsection}{Fully Balanced Image Functions}

Before all that, we will need to make some definitions that will help us in extending the concept of FBI functions to Abelian groups.

\begin{definition}{($\chi$-constant sets)}
    Let $G$ be an Abelian group $G$, $\chi\in \widehat{G}$ and let $S$ be a nonempty subset of $G$. We will say that $S$ is $\chi$-constant if $\chi|_S$ is constant.
\end{definition}
    
\begin{remark}
    It is easy to see that $S$ is $\chi$-constant if and only if
    $$
    \left|\widehat{1_S}(\chi)\right| = \left| S \right|. 
    $$

    On the one hand, being $S$ $\chi$-constant, we have $\chi(g) = \gamma \in \mu_n$ for all $g \in S$, therefore
    $$
    \left|\widehat{1_S}(\chi)\right| = \left|\sum_{g\in S} \chi(g)\right| = \left| S \right| \cdot \left| \gamma \right| = \left| S \right|.
    $$

    On the other hand, if $S$ is not $\chi$-constant the triangle inequality becomes strict and we must have
    $$
    \left|\widehat{1_S}(\chi)\right| = \left|\sum_{g\in S} \chi(g)\right| < \sum_{g\in S} \left| \chi(g)\right| = \left| S \right|.
    $$
\end{remark}

The opposite behavior to $\chi$-constant sets is that of $\chi$-balanced, although the definition parallels the characterization above.
    
\begin{definition}{($\chi$-balanced and fully balanced sets)}
    Let $G$ be an Abelian group $G$, $\chi\in \widehat{G}$ and let $S$ be a nonempty subset of $G$. We will say that $S$ is $\chi$-balanced if 
    $$
    \widehat{1_S}(\chi) = 0.
    $$

    We will also say that $S$ is fully balanced if it is either $\chi$-constant or $\chi$-balanced for every $\chi\in\widehat{G}$.   
\end{definition}


\begin{remark}\label{G-bal}    
    From the definition, $S$ being balanced is somehow granted if $\chi(S)$ is equidistributed along $\chi(G) = \mu_\ell = \{ \gamma_1,\ldots,\gamma_\ell\}$. In particular, as we saw previously, $G$ is always $\chi$-balanced for any character $\chi$, except for the trivial one $\chi_0 = 1_G$, for which it is $\chi_0$--constant. This is entirely predictable as
    $$
    \left| \chi^{-1} (\gamma_i) \right| = |\ker(\chi)|, \qquad i = 1,\ldots,\ell.
    $$
    Analogously, all subgroups $K \leq G$ are $\chi$--balanced, as long as $\chi|_K \neq 1_K$. That is, all subgroups are fully balanced sets.

    \ 
    
    However, equidistribution is not a necessary condition. Let us consider
    $$
    \chi: \mathbb{Z} / 16 \mathbb{Z} \to S^1 \subset \mathbb{C}, \qquad \chi(x) = i^x,
    $$
    which is obviously a character. Then, the set
    $$
    S = \{ 0,\ 1,\ 2,\ 3,\ 5,\ 7,\ 13,\ 15 \} \subset \mathbb{Z} / 16 \mathbb{Z}
    $$
    is $\chi$-balanced, as
    $$
    \widehat{1_S}(\chi) = \sum_{x\in S} \chi(x) = 0,
    $$
    but the images are not equidistributed, as $\pm 1$ appear once, while $\pm i$ appear three times apiece.

    The full geometric interpretation of $\chi$-balanced sets comes from the Rédei--de Brujin--Schoenberg Theorem \cite{redei,brujin,schoenberg}. We have $S$ is $\chi$--balanced if and only if
    $$
    \sum_{x \in S} \chi(x) = 0,
    $$
    so we are left with a sum--zero subset of $\mu_\ell$. 
    
    Then said theorem shows that any such set can be written as set unions and/or set differences of full sets of roots $\mu_t$ (with $t|\ell$), eventually rotated. In our previous example we had a full set $\mu_4$ (comprising $\{\pm1, \pm i \}$) and two extra sets $\{ \pm i\}$ which is the set $\mu_2 = \{ \pm 1\}$ after a $\pi/2$--rotation. Of course one can also see it as three full sets $\mu_4$ minus two sets $\mu_2$ (without any rotations involved).

    \ 

    Using this characterization is easy to see, for example, that
    $$
    S = \{ 1, \ 3 \} \subset \mathbb{Z} / 4 \mathbb{Z} 
    $$
    is a fully balanced set which is not a subgroup.
\end{remark}

These notions can be easily extended to multisets, changing the group $G$ for a multiset $(G,m)$. In particular, given $\chi \in \widehat{G}$ we will say that:
\begin{itemize}
    \item $(G,m)$ is $\chi$-constant if $\chi|_{\sup(G,m)}$ is constant. This is equivalent to
    $$
    \left|\widehat{m}(\chi)\right| = \left| \sum_{g\in G} m(g) \chi(g) \right| = \sum_{g \in \sup(G,m)} m(g).
    $$
    \item $(G,m)$ is $\chi$-balanced if
    $$
    \widehat{m}(\chi) = \sum_{g\in G} m(g) \chi(g) = 0.
    $$
    \item $(G,m)$ is fully balanced if it is either $\chi$-constant or $\chi$-balanced for every $\chi\in\widehat{G}$.
\end{itemize}

\begin{remark}
Note that the presence of a multiplicity function prevents $(G,m)$ to be automatically $\chi$-balanced.

On the other hand, as we allow multiplicities to be $0$, this definition actually deals also with subsets: if we are interested in studying these properties on a particular subset $S \subset G$, we can simply change $m|_S$ to be $1$, $m|_{G\setminus S}$ to be $0$ and consider the new multiset.
\end{remark}

This multiset framework is particularly interesting for us as, associated to a function $f:G \to H$ between Abelian groups we can define a multiset $(H,m)$ with the multiplicity given by
$$
m(h) = | f^{-1} (h) |, \qquad \forall h \in H,
$$
where obviously $\sup(H,m) = \img(f)$. We can now establish our main concept.

\begin{definition}{(FBI function)}
    Let $G$ and $H$ be Abelian groups, and $f:G\to H$ a function. Then $f$ is said to be a fully balanced image function (or simply is said to be FBI) if the multiset $(H,m)$ associated to $f$ is fully balanced.
\end{definition}

In other words, a function $f:G\to H$ is an FBI function if
$$
\left|\sum_{g\in G} \chi(f(g))\right| \in \big\{ 0, \; \left|G\right| \big\}, \qquad \forall \chi\in\widehat{H}.
$$

\begin{remark}
In particular, group homomorphisms are always FBI functions, as in this case the multiplicity function is given by
$$
m(h) = \left\{ \begin{array}{ll} 
|\ker(f)| & \text{ if } \; h \in \img(f) \\
0 & \text{ otherwise}
\end{array} \right.
$$
so one can prove $(H,m)$ is fully balanced in the same way subgroups are. 

\ 

However there are more examples of FBI functions. For instance, we can consider a group homomorphism $f:G\to H$, preceded by any permutation in $G$ which only swaps elements with the same image by $f$. These examples actually set a template for a characterization of FBI functions.
\end{remark}

As it is customary in group theory, given a subgroup $K \leq H$ (still in additive notation) we will call an orbit of $K$ (or a translation of $K$) any set of the form
$$
h+K = \big\{ h+k \; | \; k \in K \big\},
$$
for some $h \in H$.

For technical reasons we will notate the set of $\chi\in\widehat{H}$ such that $f$ is $\chi$-constant as $C(f)$, and call it the \textit{constant set}, and the set of all those $\chi\in\widehat{H}$ that balance $f$ as $B(f)$, and call it the \textit{balancing set}. Therefore $f:G\to H$ is FBI if and only if
$$
C(f)\cup B(f) = \widehat{H}.
$$

It is easy to show that $C(f)$ is a subgroup of $\widehat{H}$, as $\chi_0$ is in $C(f)$ and if $\chi_1|_{\img(f)}$ and $\chi_2|_{\img(f)}$ are constant, then $(\chi_1\chi_2)|_{\img(f)}$ is also constant. Furthermore, $B(f)$ is a disjoint union of orbits of $C(f)$, as if $\chi_1\in C(f)$ then it is constant in $\img(f)$ with value $c$ and if $\chi_2\in B(f)$, then
$$
\sum_{g\in G} \left(\chi_1\chi_2\right)\big(f\left(g\right)\big) = \sum_{g\in G} \chi_1\big(f\left(g\right)\big)\chi_2\big(f(g)\big) = c\sum_{g\in G} \chi_2\big(f\left(g\right)\big) = 0,
$$
so $\chi_1\chi_2$ is in $B(f)$.

\begin{theorem}{(Characterization of FBI functions)}\label{FBIchar}
    Let $G$ and $H$ be finite Abelian groups and $f:G\to H$ be a function. Then, $f$ is FBI if and only if $\img(f)$ is the orbit of a subgroup of $H$ and $\left|f^{-1}(y)\right|$ is constant for $y\in\img(f)$.
\end{theorem}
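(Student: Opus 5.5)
The plan is to prove each direction separately. The converse is a direct computation. The forward direction is a Fourier-analytic argument on the multiplicity function $m(h)=|f^{-1}(h)|$, using the inverse Fourier transform and the fact that $C(f)$ is a subgroup of $\widehat{H}$.

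\emph{($\Leftarrow$)} Suppose $\img(f)=h_0+K$ with $K\leq H$ and $|f^{-1}(y)|=c$ for all $y\in\img(f)$. Then $m=c\,1_{h_0+K}$, and for $\chi\in\widehat{H}$ we get
$$
\widehat{m}(\chi)=c\sum_{k\in K}\chi(h_0+k)=c\,\chi(h_0)\sum_{k\in K}\chi(k).
$$
This equals $0$ if $\chi|_K\neq 1_K$. If $\chi|_K=1_K$, then $\chi$ is constant on $h_0+K$, so $\chi\in C(f)$. Hence $f$ is FBI, exactly as in Remark \ref{G-bal}.

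\emph{($\Rightarrow$)} Assume $C(f)\cup B(f)=\widehat{H}$. For $\chi\in C(f)$ write $\chi|_{\img(f)}\equiv c_\chi$, so that $\widehat{m}(\overline{\chi})=|G|\,\overline{c_\chi}$, while $\widehat{m}(\chi)=0$ for $\chi\in B(f)$. Note that $B(f)$ is closed under conjugation, since $\widehat{m}(\overline\chi)=\overline{\widehat{m}(\chi)}$ because $m$ is real. Therefore $\widehat m$ is supported on the subgroup $C(f)$, and the inverse Fourier transform gives
$$
|H|\,m(y)=\sum_{\chi\in C(f)}\widehat{m}(\chi)\,\chi(y)=|G|\sum_{\chi\in C(f)}\overline{c_\chi}\,\chi(y),
$$
up to the conjugation convention in $\dhat{m}=|H|m$, which I will track carefully.

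Now fix any $y_0\in\img(f)$, so $c_\chi=\chi(y_0)$ for every $\chi\in C(f)$. Then
$$
|H|\,m(y)=|G|\sum_{\chi\in C(f)}\chi(y-y_0).
$$
Set $K=\{h\in H:\chi(h)=1\ \forall\chi\in C(f)\}$, the annihilator of $C(f)$ in $H$. By the subgroup orthogonality relation, applied to the group $C(f)$ (equivalently, by the fact that characters separate points of $H/K$), the sum $\sum_{\chi\in C(f)}\chi(y-y_0)$ equals $|C(f)|$ if $y-y_0\in K$ and $0$ otherwise. Hence
$$
m=\frac{|G|\,|C(f)|}{|H|}\,1_{y_0+K}.
$$
This shows that $\img(f)=y_0+K$ is an orbit of a subgroup and that the fibres all have the same cardinality.

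The main obstacle is the duality step. One must justify that the sum of $\chi(x)$ over a subgroup $C\leq\widehat{H}$ vanishes unless $x$ lies in its annihilator. This requires identifying $\widehat{H}$ with characters of $H$ and using the double-dual $H\cong\dhat{H}$, with $x$ evaluated as a character of $\widehat H$. Alongside this, the conjugation conventions in the paper's Fourier transform must be kept consistent.
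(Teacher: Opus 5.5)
Your proof is correct and follows essentially the paper's route. In both, $\widehat{m}$ vanishes off the subgroup $C(f)$, and Fourier inversion together with orthogonality over $C(f)$ forces $m$ to be a constant multiple of the indicator of a coset of the annihilator of $C(f)$; the only cosmetic difference is that you keep a base point $y_0\in\img(f)$ with $c_\chi=\chi(y_0)$ instead of translating so that $0\in\img(f)$.

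Your converse, which handles characters trivial on $K$ separately, is more careful than the paper's. The duality step you flag as the main obstacle is in fact immediate, since $\chi\mapsto\chi(x)$ is itself a character of the finite group $C(f)$.
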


\begin{proof}

Let us consider the multiset corresponding to the image of $f$, $(H,m)$, as described before. Then, we want to prove that $(H,m)$ is fully balanced if and only if $\sup(H,m) = \{x\in H \mid m(x) \neq 0\}$ is the orbit of a subgroup in $H$ and $m$ is constant in $\sup(H,m)$.

The converse implication is easier to prove as, under the hypothesis, one has
$$
\sum_{h\in H} m(h)\chi(h) = c \sum_{h\in\sup(m)} \chi(h) =\begin{cases} 0 & \text{ if } \; \chi \neq 1_{H},  \\ |H| & \text{ if } \; \chi = 1_H, \end{cases}
$$
where $c$ is the constant value for $m$ in $\sup(m)$ .

For the direct implication, we can proceed as follows. Let us write
$$
c_m = \sum_{g\in G} m(g).
$$

As $(H,m)$ is fully balanced, we know that 
$$
\widehat{m}(\chi) = \sum_{g\in G} m(g) \chi(g)  \in \big\{ 0,\ -c_m, \ c_m \big\}
$$ 
for any $\chi\in\widehat{H}$. We will also assume--without loss of generality--that $m(0) \neq 0$. As $m$ is not the zero function and both properties (being fully balanced and being an orbit) are preserved by translation, we are simply forcing our orbit to be \textit{the} subgroup. In that situation, $-c_m$ is not a possible value for $\widehat{m}(\chi)$, because $0$ has non-zero multiplicity and $\chi(0) = 1$.

It is clear that, by definition,
$$
\widehat{m} = c_m 1_{C(f)}.
$$

Using now the inverse discrete Fourier transform formula, we get
$$
\widehat{\widehat{m}} = |G|m = |c_m| \widehat{1}_{C(f)}.
$$
As $C(f)$ is a subgroup, we have that
$$
|c_m|\widehat{1_{C(f)}} = |c_m| \cdot |C(f)| \cdot 1_{C(f)^{\perp}},
$$
so $\sup(m) = \img(f) = C(f)^{\perp}$ is a subgroup of $G$ and $m$ is constant in $C(f)^{\perp}$ and zero elsewhere.
\end{proof}

It should be remarked that, if $f$ is FBI, then $C(f)^{\perp}$ is the underlying subgroup of $\img(f)$, a fact that will be used in the following subsection.

\begin{corollary}
    Let $(G,m)$ be a multiset, where $G$ is an Abelian finite group. $(G,m)$ is fully balanced if and only if $S = \sup(G,m)$ is an orbit of a subgroup of $G$ and $m|_S$ is constant.
\end{corollary}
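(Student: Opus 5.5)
The plan is to reduce the corollary to Theorem \ref{FBIchar} by realising an arbitrary multiset as the image multiset of a suitable function. Given $(G,m)$ with $m$ not identically zero, set $N = \sum_{g\in G} m(g)$ and choose any finite Abelian group $K$ of order $N$, say $K = \mathbb{Z}/N\mathbb{Z}$. Partition $K$ into blocks $B_g$ with $|B_g| = m(g)$, one for each $g\in G$; this is possible because the block sizes add up to $|K|$. Then define $f:K\to G$ by $f(k) = g$ for $k\in B_g$. The multiset associated to $f$ in the sense of the paper, namely $h\mapsto |f^{-1}(h)|$, is exactly $(G,m)$, and its support $\img(f)$ equals $\sup(G,m)$.

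Once this is set up, the statement follows from the definitions. By definition, $f$ is FBI if and only if its associated multiset $(G,m)$ is fully balanced. By Theorem \ref{FBIchar} (with the roles of the groups played by $K$ and $G$), $f$ is FBI if and only if $\img(f)$ is an orbit of a subgroup of $G$ and $|f^{-1}(y)| = m(y)$ is constant on $\img(f)$. Substituting $\img(f) = \sup(G,m) = S$ gives precisely the claimed equivalence.

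Alternatively, the proof of Theorem \ref{FBIchar} was already written purely in terms of the multiset $(H,m)$ and never used the domain of $f$ beyond the fact that $c_m = \sum m(h)$. So one could simply note that the same argument applies verbatim to $(G,m)$. The converse direction is the computation of $\sum_{h} m(h)\chi(h)$ as $c$ times a character sum over a coset. The direct direction translates $S$ to contain $0$, observes that $\widehat{m} = c_m 1_{C}$ with $C = \{\chi : \chi|_S \text{ constant}\}$ a subgroup, and applies Fourier inversion to obtain that $m$ is a constant multiple of $1_{C^\perp}$.

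The only point needing any care is the degenerate case. If $m\equiv 0$, then the support is empty and the notions are vacuous, so the paper's convention of a nonempty support should be assumed. Beyond that, I expect no real obstacle: the construction of $f$ is the one step that needs to be stated explicitly, and it is elementary.
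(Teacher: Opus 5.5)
Your proof is correct. The paper gives no separate argument for this corollary. Its proof of Theorem \ref{FBIchar} is already written entirely for the multiset $(H,m)$: it opens by reformulating the goal as ``$(H,m)$ is fully balanced iff its support is an orbit and $m$ is constant there,'' and uses nothing about $f$ beyond $m$ and $c_m$. So the corollary is simply what that proof establishes, which is exactly your alternative route.

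Your main route is slightly different: it treats Theorem \ref{FBIchar} purely as a black box and realises $(G,m)$ as the fibre multiset of some $f:\mathbb{Z}/N\mathbb{Z}\to G$. This costs one small construction. In return, it makes explicit that the corollary is no more general than the theorem, since a finite Abelian group exists of every order $N\geq 1$. It also means you do not have to rely on the details of the written proof.

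Your caveat about $m\equiv 0$ is right and necessary. The zero multiset is vacuously fully balanced, since $\widehat{m}\equiv 0$, but its empty support is not an orbit. The paper excludes this case only tacitly (``as $m$ is not the zero function'').
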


\end{subsection}

\begin{subsection}{The FBI Problem}

Let us focus now on the problem that will be solved using the generalised version of the $\GPK$ algorithm for finite Abelian groups. 

\begin{definition}{(FBI problem for finite Abelian groups)}
Let $G$ and $H$ be finite Abelian groups and $f:G\to H$ be an FBI function. The FBI problem for $f$ will be that of determining the order of the underlying group of $\img(f)$ using $f$ as a black box.
\end{definition}

An immediate remark should be made on the classical complexity of this task.

\begin{remark}
The classical complexity of this problem highly varies with the choice of groups $G$ and $H$. As an example, taking into account that $|\img(f)|$ must divide both $|G|$ and $|H|$. So, if $\mathrm{mcd}(|G|,|H|) = 1$, then the subgroup $\img(f)$ can only be the trivial one, so it is actually known without making any calls to $f$. 

However, if $G$ and $H$ happen to be the same group, then the number of calls needed in the worst case scenario is $\mathcal{O}(|G|/p)$, where $p$ is the order of the smallest subgroup in $\img(f)$. Another relevant situation is when $G = \left(\mathbb{Z}/p\mathbb{Z}\right)^n$ and $H = \left(\mathbb{Z}/p\mathbb{Z}\right)^m$ for a prime $p$. Here, the number of calls needed is always $\mathcal{O}(p^{n-1})$.
\end{remark}

To solve this problem using a quantum computer we will mainly make use of the following result.

\begin{lemma}
Applying the finite Abelian group version of the $\GPK$ algorithm to an FBI function, $f:G\to H$, with marker $\chi_h$ with $h\in H$ yields $0$ as a result if $f$ is $\chi_h$-constant and any other result if $f$ is $\chi_h$-balanced.
\end{lemma}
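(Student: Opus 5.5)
The plan is to work directly with the amplitude formula from the Correctness Theorem and single out the state $\ket{\w{0}}_G$. By that theorem, after $\GPK(f,\chi_h)$ the amplitude of $\ket{\w{0}}_G$ is
$$
\alpha_0 = \frac{1}{|G|}\sum_{g\in G} \overline{\chi_h}\big(f(g)\big),
$$
because $\chi_g(0)=1$ for every $g$. Since $\overline{\chi_h}$ is again a character of $H$, namely the inverse of $\chi_h$, and $\overline{\chi_h}(f(g)) = \overline{\chi_h(f(g))}$, we get $\alpha_0 = \overline{\widehat{m}(\chi_h)}/|G|$. Here $(H,m)$ is the multiset attached to $f$ and $\widehat{m}(\chi_h) = \sum_{g\in G}\chi_h(f(g))$. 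So everything reduces to evaluating $|\widehat{m}(\chi_h)|$ in the two cases allowed by the FBI hypothesis.

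\emph{Constant case.} Suppose $f$ is $\chi_h$-constant, so $\chi_h$ takes a single value $\gamma$ on $\img(f)$. Then $\alpha_0 = \overline{\gamma}\,|G|/|G|$, which gives $|\alpha_0| = 1$. The final state has unit norm, since it is produced by unitary operations. Therefore every other amplitude $\alpha_z$ with $z\neq 0$ must vanish, and the measurement returns $0$ with certainty. As a sanity check, one can also verify this directly: substitute the constant value into the formula for $\alpha_z$ to get $\alpha_z = \overline{\gamma}\,|G|^{-1}\sum_{g}\chi_g(z)$. This vanishes for $z\neq 0$ by the orthogonality relation for $G$, which the paper uses in the form $\sum_{g}\chi_z(g)=0$ for nontrivial $\chi_z$, together with the symmetry $\chi_g(z)=\chi_z(g)$ already used in the HSP lemma.

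\emph{Balanced case.} Suppose $f$ is $\chi_h$-balanced, so $\widehat{m}(\chi_h)=0$. Then $\alpha_0 = 0$, and the measured result $\delta$ is never $0$, meaning it is some nonzero element of $G$. This is how I would read ``any other result.''

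The only delicate point is bookkeeping. One has to match the conjugated character $\overline{\chi_h}$ appearing in $\alpha_0$ with the definition of $\chi_h$-constant and $\chi_h$-balanced used for FBI functions. The resolution is that $\overline{\chi_h}$ is constant on $\img(f)$ exactly when $\chi_h$ is. Likewise, the sum for $\overline{\chi_h}$ vanishes exactly when the sum for $\chi_h$ does, because complex conjugation preserves both properties. The rest is direct substitution into the amplitude formula.
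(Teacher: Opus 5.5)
Your proof is correct and follows the paper's route: isolate $\alpha_0 = \frac{1}{|G|}\sum_{g\in G}\overline{\chi_h}(f(g))$ and evaluate it in the constant and balanced cases. Your version is slightly more careful than the paper's. The paper asserts $\alpha_0 = 1$ in the constant case, which holds only up to the unimodular phase $\overline{\gamma}$, and it leaves implicit the normalization argument showing that $|\alpha_0|=1$ forces every other amplitude to vanish.
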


\begin{proof}
We know that the amplitude of the state $\ket{\textbf{z}}_H$ in the final superposition of the $\GPK$ algorithm applied to $f$ with marker $\chi_h$ is:
$$
\alpha_z = \frac{1}{|G|}\displaystyle\sum_{g\in G} \overline{\chi_h}\big(f(g)\big)\chi_g(z),
$$
so in particular, the amplitude of $\ket{0}_G$ is:
$$
\alpha_0 = \frac{1}{|G|}\displaystyle\sum_{g\in G} \overline{\chi_h}\big(f(g)\big).
$$
It follows trivially that if $f$ is $\chi_h$-constant then $\alpha_0 = 1$ and that if it is $\chi_h$-balanced, then $\alpha_0 = 0$.
\end{proof}

The overview of the algorithm that we will present shortly is as follows. Several applications of the $\GPK$ will be performed using the target function $f$ but changing the markers $\chi_{h_i}\in\widehat{H}$. Depending on the result of the algorithm, each $\chi_{h_i}$ will be stored as part of the constant set or the balancing set of $f$. After some rounds of this process, we will have gathered enough information to solve the FBI problem.

Taking all of this into account, the only thing left to do create a marker selection strategy that minimises the number of calls that will have to be made to the $\GPK$. The following is a description of said strategy.

\vspace{4mm}

\textbf{Marker Selection Algorithm:}

\vspace{2mm}

$\mathbb{STEP}$ $1$:

\vspace{2mm}

$C = \varnothing$ and $B = \varnothing.$

We begin by setting our two sets to be empty.

\vspace{2mm}

$\mathbb{STEP}$ $2$:

\vspace{2mm}

Take a non-trivial $\chi_h\in\widehat{H}$ such that $\chi_h\not\in\langle C\rangle$, $\chi_h\not\in bC$ for any $b\in B$ and such that $B\cap\langle\chi_h\rangle = \varnothing$ and compute $\GPK(f,\chi_h)$.

If the result is $\mathbf{0}$, then add $\chi_h$ to $C$ and repeat Step $2$.

Else, run Step $3$.

\vspace{2mm}

$\mathbb{STEP}$ $3$:

\vspace{2mm}

If $B = \varnothing$, then add $\chi_h$ to $B$ and go to Step $5$.

Else, for each $\chi_y\in B$, consider $\chi_{h'} = \overline{\chi_y}\chi_h$ and run Step $4$.

\vspace{2mm}

$\mathbb{STEP}$ $4$:

\vspace{2mm}

Compute $\GPK(f,\chi_{h'})$. If the result is $\mathbf{0}$, then add $\chi_{h'}$ to $C$ and go to Step $5$. Else, continue with the next element of $B$ in Step $3$.

If for every $\chi_y\in B$ we get a result different from $\mathbf{0}$, add $\chi_{h}$ and every $\chi_{h'}$ to $B$ and go to Step $5$.

\vspace{2mm}

$\mathbb{STEP}$ $5$:

\vspace{2mm}

Repeat Step $2$ until $|\langle C\rangle|\cdot(|B|+1) = |H|$. The order we are looking for is 
$$
|\img(f)| = \frac{|H|}{|\langle C\rangle|} = |B| + 1.
$$

Note that $C$ and $B$ are not storing the whole constant and balancing sets, but rather a generating set and a set of orbit representatives respectively.

\begin{remark}
Let us take a look at an example of how the algorithm works. Consider the function $f:\mathbb{Z}/12\mathbb{Z}\to\mathbb{Z}/12\mathbb{Z}$ given by:
$$
\begin{array}{cccc}
    f(0) = 0 & f(1) = 3 & f(2) = 3 & f(3) = 9 \\
    f(4) = 9 & f(5) = 3 & f(6) = 0 & f(7) = 6  \\
    f(8) = 0 & f(9) = 6 & f(10) = 6 & f(11) = 9
\end{array}
$$
It is clearly an FBI function because the image is a subgroup ($\img(f) = \{0,3,6,9\}$) and the multiplicities are all the same. It should also be noted that, although here we are showing the function $f$ explicitly for explanatory purposes, we have to imagine it as just a black box. 

For this example, the character associated with the element $k\in\mathbb{Z}/12\mathbb{Z}$ is chosen to be the one defined as:
$$
\chi_k(j) = \exp\left(\frac{2\pi i\,k\cdot j}{12}\right).
$$

Following the instructions of the marker selection algorithm, we take any non-trivial element of $\widehat{H}$, for instance, $\chi_1$. After applying the $\GPK$ using said marker, we get that $\chi_1$ balances $f$, so we should add it to $B$. Note that this piece of information discards the possibility that $f$ is constant, and also that any other $\chi_k$ such that $1$ is in $\langle k\rangle$ must also balance $f$, although we cannot add it to $B$ because we do not know whether they represent a new orbit or not.

On that note, we should now take another character $\chi_k$ such that $k|12$, so let us consider, for example, $\chi_2$. It is clear that $\chi_2$ also balances $f$, so after we apply the $\GPK$ on more time we should get a result different form $\w{0}$, but before adding it to $B$ we must verify that the do not represent the same orbit, i.e, that $\chi_1\overline{\chi_2} = \chi_{11}$ is not in $C(f)$. In this case, it is trivial that this is not the case, as $1\in\langle 11\rangle$, but in general we should have needed to make an extra call to the $\GPK$. In any case, we can add $\chi_2$ and $\chi_{11}$ to $B$.

Let us now choose a new character following the in instructions of the algorithm. For instance, let us take $\chi_4$, which should return $\w{0}$ when being used as an argument to the $\GPK$, so we can add it to $C$. The question now is whether we are finished, but we can easily check that we are, as $|\langle\chi_4\rangle| = 3$ and $|B|+1 = 4$. This also implies that $\img(f)$ has order 
$$
\frac{|H|}{|\langle C\rangle|} = \frac{12}{3} = 4.
$$
Furthermore, we know that the underlying subgroup of $\img(f)$ is  $C(f)^\perp = \langle C\rangle^\perp$, so we can calculate a complete description of the image of the function.

Comparing the performance of our algorithm to the classical one, it can be seen that we would have needed six calls to $f$ in the worst case scenario to achieve our goal, while we have done it in just three (although we would have done it in four if we had considered $\chi_3$ instead of $\chi_4$).
\end{remark}

Let us check that the algorithm correctly solves the problem and analyse its worst case oracle complexity.

\begin{theorem}{(Correctness of the marker selection algorithm)}
The previous algorithm correctly solves the FBI problem for a function $f$ in at most $$
\big(|\img(f)|-1\big)\left(\log_2\left(\frac{|H|}{|\img(f)|}\right)+1\right)
$$
calls to the $\GPK$.

\end{theorem}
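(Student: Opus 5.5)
We need to establish two things: that the algorithm returns $|\img(f)|$ when it stops, and that the number of $\GPK$ calls is bounded by $(k-1)(\log_2(|H|/k)+1)$ with $k=|\img(f)|$.

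Throughout we use Theorem \ref{FBIchar} and the remark following it. Since $f$ is FBI, $C(f)=K^\perp$, where $K$ is the subgroup underlying $\img(f)$. Hence $|C(f)|=|H|/k$, and $B(f)=\widehat{H}\setminus C(f)$ is the union of the $k-1$ nontrivial cosets of $C(f)$.

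\textbf{Invariants.} By the preceding lemma, each $\GPK$ call classifies its marker exactly. We then prove by induction on the steps that two invariants hold:
\begin{enumerate}
\item[(i)] $C\subseteq C(f)$, so that $\langle C\rangle\le C(f)$;
\item[(ii)] the elements of $B$ lie in $B(f)$ and represent pairwise distinct cosets of $\langle C\rangle$.
\end{enumerate}
For (ii), the point is that in Step 4 a new $\chi_{h'}=\overline{\chi_y}\chi_h$ is placed in $C$ exactly when $\chi_h$ and $\chi_y$ lie in the same coset of $C(f)$. If every test fails, $\chi_h$ represents a new coset, and the products $\chi_{h'}$ are balancing as well.

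Two consequences follow. First, $|\langle C\rangle|(|B|+1)\le |H|$ always. Second, equality forces $\langle C\rangle=C(f)$ and $|B|=k-1$. This rests on the observation that the cosets represented by $B$, together with $\langle C\rangle$ itself, are distinct cosets of $\langle C\rangle$ lying in distinct cosets of $C(f)$. The inequality $|B|+1\le k$ holds because $B$ contains representatives of distinct $C(f)$-cosets, at least for the ``primary'' $\chi_h$'s. At this point one has to be careful about the extra $\chi_{h'}$'s that are added to $B$. The cleanest repair is to count only the primary $\chi_h$ additions, and to read the stopping rule as $|\langle C\rangle|\cdot(\#\text{distinct } C(f)\text{-cosets found}+1)=|H|$. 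Equality then gives $|C(f)|=|\langle C\rangle|$ and $|\img(f)|=|H|/|\langle C\rangle|$, which proves correctness. Termination follows because every iteration either enlarges $\langle C\rangle$ or finds a new $C(f)$-coset, and both quantities are bounded.

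\textbf{Counting.} Each call either adds to $C$ an element outside $\langle C\rangle$, or confirms a new balancing coset. Additions to $C$ that strictly enlarge $\langle C\rangle$ at least double its order, by Lagrange. Since $\langle C\rangle\le C(f)$ and $|C(f)|=|H|/k$, at most $\log_2(|H|/k)$ such additions occur. The Step 2 conditions ($\chi_h\notin\langle C\rangle$, and $\chi_h$ not in a known coset) guarantee that each Step 2 call is informative.

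A call that returns nonzero in Step 2 starts a Step 3/4 round. That round uses at most $|B|$ further calls and ends in one of two ways: either it finds a new coset, which happens at most $k-1$ times, or it finds a new element of $C(f)\setminus\langle C\rangle$, which is charged to the logarithmic budget.

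We then charge the calls as follows:
\begin{itemize}
\item each of the $k-1$ coset discoveries pays for one Step 2 call plus the failed Step 4 tests;
\item each enlargement of $\langle C\rangle$ pays for its successful call.
\end{itemize}
The target bound is reached by showing that between consecutive coset discoveries only $O(\log)$ calls can happen, giving at most $\log_2(|H|/k)+1$ calls per nontrivial coset. Summing over the $k-1$ nontrivial cosets yields $(k-1)(\log_2(|H|/k)+1)$. The constraint $B\cap\langle\chi_h\rangle=\varnothing$ is what prevents redundant tests here.

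\textbf{Main obstacle.} The difficult step is this amortised count. The Step 4 loop over all of $B$ can cost $|B|$ calls per round, so a naive sum gives a quadratic bound in $k$. I would first try to charge each failed Step 4 test to a distinct pair consisting of a coset and a doubling event, and then check that the total number of such pairs is at most $(k-1)\log_2(|H|/k)$. If that charging does not close, I would restrict attention to the worst case in which $\langle C\rangle$ is built first and the cosets are discovered afterwards, and bound that ordering directly.
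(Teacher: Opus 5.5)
Your correctness outline parallels the paper's: only new elements of $C(f)$ enter $C$, and only representatives of new cosets of $C(f)$ enter $B$, so equality in Step~5 gives $\langle C\rangle=C(f)$. However, your ``repair'' replaces the algorithm's stopping rule instead of analysing the rule as given. The genuine gap is the bound. Your counting section never proves it, and the decisive amortisation is left as a plan.

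Worse, the repair discards exactly what keeps the paper's count linear. Every failed Step~4 test $\overline{\chi_y}\chi_h$ is itself added to $B$, so a round that enlarges $B$ costs one call per element it adds. That is what justifies the paper's $|B|=|\img(f)|-1$ calls for building $B$. If $B$ keeps only the primary $\chi_h$, take $H=(\mathbb{Z}/2\mathbb{Z})^m$ with $C(f)$ trivial. Then the $j$-th coset found costs $j$ calls, for $|\img(f)|(|\img(f)|-1)/2$ in total against a target of $|\img(f)|-1$.

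For the $C$ part, the paper takes the worst case to be $B$ completed first. Each of the at most $\log_2|C(f)|$ generators of $C(f)$ is then found only at the last test of a Step~4 loop over all of $B$. Your fallback ordering, with $\langle C\rangle$ built first, is the reverse one, where the Step~4 loops are short.

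Be aware, however, that no amortisation yields the number in the statement. A round that adds a generator to $C$ in Step~4 begins with a Step~2 call that returned nonzero. So it costs up to $|B|+1=|\img(f)|$ calls, not the $|B|$ the paper charges. For example, take $H=(\mathbb{Z}/2\mathbb{Z})^2$ and $|\img(f)|=2$, so $C(f)=\{\chi_0,\chi\}$. Query one balancing character $\chi_a$, then the other one $\chi_b$, and then $\overline{\chi_a}\chi_b=\chi$ in Step~4. This is a legal run using $3$ calls, while the stated bound is $(2-1)(1+1)=2$.

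For $H=(\mathbb{Z}/2\mathbb{Z})^m$ the correct charging works as follows. Elements of $B$ pay for the rounds that create them, and at most $\log_2(|H|/|\img(f)|)$ rounds add to $C$, each costing at most $|\img(f)|$. This gives $(|\img(f)|-1)+|\img(f)|\log_2(|H|/|\img(f)|)=2^r(m-r+1)-1$, with $|\img(f)|=2^r$ and $|H|=2^m$. That is the binary bound quoted in the introduction.

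For other groups the issue you spotted is real, since the $\chi_{h'}$ can repeat cosets of $C(f)$. Moreover, the condition $B\cap\langle\chi_h\rangle=\varnothing$ can leave Step~2 with no admissible marker. For $H=\mathbb{Z}/3\mathbb{Z}$ and $f$ bijective, once $B=\{\chi_1\}$ the only candidate left, $\chi_2$, generates $\widehat H$. The algorithm or the bound must be amended before any proof can go through.
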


\begin{proof}
Firstly, we are adding to $C$ only independent elements of $C(f)$, as Step 2 guarantees independence and we only add $\chi$ to $C$ if $\GPK(f,\chi)$ returns $\w{0}$.

Secondly, we are only adding representatives of different orbits of $B(f)$ to $B$, as we only add characters $\chi$ such that $\GPK(f,\chi)$ does not return $\w{0}$ and Step 4 guarantees that the character is not part of a previously considered orbit in $B$. Considering we stop when $(|B|+1)\langle C\rangle = |H|$, we know that we will get a generating set for $C(f)$ and a set of representatives for $B(f)$.

Lastly, the worst case scenario is that we obtain first all the elements in $B$ and then we obtain all the elements in $C$ in the last check of Step 4. This implies making
$$
|B| = |\img(f)|-1
$$
calls to the $\GPK$ for $B$ in the worst case scenario and afterwards making an extra of
$$
|B|\log_2\left(\frac{|H|}{|\img(f)|}\right)
$$
calls to the $\GPK$, as we will need at most 
$$
\log_2\big(|C(f)|\big) = \log_2\left(\frac{|H|}{|\img(f)|}\right)
$$
elements to generate $C(f)$. Adding these two expressions together we get
$$
\big(|\img(f)|-1\big)\left(\log_2\left(\frac{|H|}{|\img(f)|}\right)+1\right).
$$
\end{proof}

Of course, this result implies that we have a nice improvement on the deterministic oracle complexity in the quantum situation over the classical case when $\img(f)$ is of small size when compared to $G$.

Furthermore, this algorithm and marker selection strategy show a path to solve similar problems for different classes of finite Abelian group functions $f:G\to H$ that have different Fourier transform supports.

\end{subsection}

\end{section}

\begin{section}{Conclusion and Further Research}

The main contributions that we have presented here are the generalisation of the $\GPK$ algorithm to the setting of finite Abelian groups, the solution using this new algorithm to solve the hidden subgroup problem for finite Abelian groups with better probability than using Simon's algorithm and the statement and solution of the fully-balanced image problem.

\

Our hope is to further generalise these ideas to the case of finite non-Abelian groups, as the hidden subgroup problem for non-Abelian groups has not being solved in general \cite{hsp,calixto}. There are two main complications in achieving this goal: on the one side, representations of non-Abelian groups are not one-dimensional---although the Fourier transform can be defined without many complications, as shown in \cite{calixto}---. On the other side, one-dimensional characters do not distinguish general non-Abelian subgroups. 

\

However, there are also open questions regarding these two final contributions. For the HSP algorithm we have employed a very naive strategy of randomly choosing a non-trivial character, so perhaps a more elegant strategy that makes use of the structures of $S^\perp$ and $\widehat{H}$ might further improve our solution. For the FBI problem, the selection strategy that we propose is optimal among marker algorithms for some groups, but the general landscape is too vast to prove optimality without assuming some conditions on the finite Abelian groups. Furthermore, the idea of distinguishing classes of functions using the balancing and constant sets can be extended to distinguish any two classes of functions with disjoint Fourier support.

\end{section}

\section*{Acknowledgements}

This article is funded by IMUS - María de Maeztu grant CEX2024-001517-MApoyo a Unidades de Excelencia María de Maeztu, by grants SOL2024-31596 and SOL2024-31708 from Plan Propio de Investigación y Transferencia de la Universidad de Sevilla, cofunded by the EU - Ministerio de Hacienda y Función Pública - Fondos Europeos - Junta de Andalucía – Consejería de Universidad, Investigación e Innovación (second author) and by MICIU/AEI/10.13039/501100011033, grant PID2024-156912N funded by Ministerio de Ciencia e Innovación (both authors).

\section*{Competing interests}

The authors report that there are no competing interests to declare.

\bibliographystyle{nature}
\bibliography{GPK3}

\end{document}